\date{}
\numberwithin{equation}{section}
\theoremstyle{plain}
\newtheorem{thm}{Theorem}[section]
\newtheorem{defn}{Definition}[section]
\begin{document}

\begin{frontmatter}

\title{Detection of Phase Shift Events}
\runtitle{Detection of Phase Shift Events}

\begin{aug}
\author{\fnms{William} \snm{Marshall}\thanksref{m1}\ead[label=e1]{wjmarsha@uwaterloo.ca}}
\and
\author{\fnms{Paul} \snm{Marriott}\thanksref{m1}\ead[label=e2]{pmarriott@uwaterloo.ca}}

\runauthor{W. Marshall et al.}

\affiliation{Department of Statistics and Actuarial Science, University of Waterloo\thanksmark{m1}}

\address{University of Waterloo\\
200 University Ave W \\
Waterloo, On, Canada \\
\printead{e1}\\
\phantom{E-mail:\ }\printead*{e2}}

\end{aug}

\begin{abstract}

We consider the problem of change-point estimation of  the   instantaneous phase of an observed time series. Such change points, or \textit{phase shifts}, can be markers of information transfer in complex systems;   their analysis  occurring in geology, biology and physics, but most notably in neuroscience.  We develop two non-parametric approaches to this problem: the cumulative summation (CUSUM) and phase derivative (PD)  estimators. In general the CUSUM estimator has higher power for identifying single shift events, while the PD estimator has better temporal resolution for multiple ones. A system of weakly coupled R\"{o}ssler attractors provides an  application in which there are high levels of systematic and time-dependent noise. Shift identification is also performed on beta-band activity from electroencephalogram recordings of a visual attention task, an unsupervised application which requires high temporal resolution.

\end{abstract}

\begin{keyword}[class=MSC]
\kwd[Primary ]{60K35}
\kwd{60K35}
\kwd[; secondary ]{60K35}
\end{keyword}

\begin{keyword}
\kwd{phase shift}
\kwd{change point}
\kwd{nonparametric}
\kwd{EEG}
\end{keyword}

\end{frontmatter}

\section{Introduction}

This paper  develops statistical methodology which is  suitable for tackling the  problem of (instantaneous)  phase shift identification from observable time series, such as in electroencephalogram (EEG) recordings. The problem is formulated in terms of  finding a  change-point in the  instantaneous phase variable. Since these phase variables are computed  from complex systems which often  include significant amounts of noise, we focus  on developing methods which are robust to such noise, but also  have the temporal resolution necessary for many application areas. 

The analysis of  phase change points   has applications in many  fields. In neuroscience, phase shift behaviour is present on small-scale (cellular) analysis of spike bursting data~\citep{Izhikevich_Ch10}. It has also been applied to other complex biological oscillators, such as the Circadian rhythm~\citep{Mead1992} and in cardiac/respiratory systems~\citep{Balocchi2004}. In geology, directional (circular) data is used to identify faults, pinchouts and other geological features~\citep{Taner1979}; angular data of the Earths rotation was used to characterize the Chandler Wobble~\citep{Gibert2008}, while  in physics, phase shift identification has applications to interferometry~\citep{EspanaBoquera1996, Grigorenko1999} and in characterizing nano-scale surfaces~\citep{Martinez2006}.  In this paper, though, we focus, for concreteness and because of their intrinsic  importance, on applications in EEG analysis  and dynamical systems.

Phase synchronization is a characteristic feature of neural assemblies, which   are  some of the  fundamental units of information processing in the brain~\citep{Braitenberg1985}. Periods of synchronization (or phase locking) occurs when  neurons associate with one another, forming  a neural assembly  to accomplish a task. A spontaneous desynchronization (or phase shift) represents dissolution of the neural assembly, and the re-organization of neural resources for the next task. When neural activity is viewed on a large scale, such as in EEG recordings, desynchronization may manifest as a phase shift in the instantaneous phase of the signal.

Due to its extremely good time resolution  EEG  is  used in brain-computer interface (BCI) applications, where the goal is to provide an augmentative communication system which can interpret spontaneous brain activity~\citep{Bai2008, Wolpaw2000}. In such applications  real-time analysis of the EEG signal is vital and this need for online computation will be highlighted  in the analysis below.

Measures of phase synchronization are also used to quantify interactions within complex (chaotic) dynamic systems of oscillators~\citep{Boccaletti2002}. One such system is the coupled R\"{o}ssler attractor, where synchronization has been observed that  is independent of the corresponding amplitude~\citep{Osipov1997, Rosenblum1996, Pikovsky1985}. The experiments of \citet{Rosenblum1996} show that strong coupling results in complete synchronization; however, during weak coupling the system displayed dynamic transitions between phase locking and phase shifting behaviour.  The existence of weak or intermediate coupling strengths is also a feature of systems which display self organized criticality (SOC), a class of dynamics which result in apparent  power-law distributions~\citep{Bak1987}. 

The goal of this paper is to develop robust methods for phase shift identification from observables where we  formulate the problem as a change-point problem. In Section \ref{methods} we describe instantaneous phase estimation, as well as the  change-point  methods which will be used. Section \ref{Osci} contains simulations of simple oscillators, both with and without phase shift behaviour, which are used to assess the proposed methods in a controlled environment. In Section \ref{Ross}, methods are applied to data generated from a system of coupled R\"{o}ssler attractors which have chaotic dynamics and spontaneous phase shift behaviour. In Section \ref{EEG}, the analysis is applied to EEG recordings where we  identify phase shifts in the beta band of participants during a visual task.

\section{Methods}
\label{methods}

\subsection{Instantaneous Phase}
\label{instant}

Phase can be considered as   the value which defines the initial state of an oscillator, and many  linear techniques, such as the Fourier transform, assume a constant phase value over the each window of estimation~\citep{Bingham1967}. Nonlinear methods are used to estimate time-dependent measures of  {\it instantaneous phase}  which capture moment-to-moment changes. Instantaneous phase can be defined using either the Hilbert~\citep{Gabor1946} or Wavelet transformations, both have been shown to produce similar results~\citep{LeVanQuyen2001}, here we use the Hilbert definition. In practice, the complex demodulation algorithm is used to estimate the instantaneous phase of an observed signal~\citep{Bingham1967,Goodman1960}.
\begin{defn}
For a given $\delta$Hz low-pass filter $\mathbb{H}[\cdot]$, the instantaneous phase of an observed signal $x_t$ in the frequency band $(\omega-\delta, \omega+\delta)$ is, 
\begin{eqnarray}
\label{CDemod}
\hat{\phi}_t = \tan^{-1}\left( \frac{\mathbb{H}[x_t\sin\{-\omega t\}]}{\mathbb{H}[x_t\cos\{-\omega t\}]} \right).
\end{eqnarray}
\end{defn}

The $\delta$Hz low-pass filter isolates the frequency band of interest, this ensures that the calculated instantaneous phase can be meaningfully interpreted as the phase of the $(\omega-\delta, \omega+\delta)$ component of the signal~\citep{Chavez2006}. Also note that removing  any  discontinuities due to the inverse tangent function is called {\em phase straightening} and will be done for applications in this paper.

\subsubsection{Theoretical  Properties}
\label{theory}

Consider  the  instantaneous phase estimator in the case of an oscillator with additive i.i.d.~noise and using a simple first-order filter. The exponentially weighted moving average (EWMA) is a tractable digital low-pass filter which is used to investigate properties of the estimator,
\begin{eqnarray}
\label{EWMA}
\mathbb{H}[x_t] = (1-\alpha)\sum_{i=0}^t \alpha^{i}x_{t-i}, \quad \alpha \in (0,1).
\end{eqnarray}
If the parameter is set to $\alpha=e^{2\pi f_c/T}$, then $f_c$ is the -3 dB cutoff point for the filter (i.e. frequency components above $f_c$ has been reduced by a factor of at least $10^{-3}$).

\begin{thm}
\label{T1}
Suppose $x_t$ is a noisy oscillator sampled at T Hz with frequency $f_0$ Hz ($\omega = 2\pi f_0/T$) and constant phase $\phi$ i.e.,  $x_t = \sin{(\omega t + \phi)} + \epsilon_t$,
with $E(\epsilon_t)=0$. Let $y_t=\mathbb{H}[x_t\sin\{-\omega t\}]$ and $\tilde{y_t}=\mathbb{H}[x_t\cos\{-\omega t\}]$ be the two components in the complex demodulation estimate of $\hat{\phi}_t$ (Eqn \ref{CDemod}) with a EWMA($\alpha$) filter $\mathbb{H}[ \cdot]$ given by (\ref{EWMA}). The expected values of $y_t$ and $\tilde y_t$ are
\begin{eqnarray*}
 E(y_t) = \frac{\cos{(\phi)}}{2} + b(y_t), \quad E(\tilde{y}_t) = \frac{\sin{(\phi)}}{2} + b(\tilde{y}_t)
\end{eqnarray*} where $b(y_t)$ and $b(\tilde{y}_t)$ are the biases  of $y_t$ and $\tilde{y}_t$, respectively. We have that 
\begin{eqnarray*}
b(y_t) &=&  \frac{(1-\alpha)\big(\cos{(2\omega t+\phi)} + \alpha\cos{(2\omega(t+1)+\phi})\big)}{2(1-2\alpha\cos{(2\omega)}+\alpha^2)} - \\ &&  \frac{\alpha^{t+1}}{2}\left(\cos{(\phi)} - \frac{(1-\alpha)(\cos{(\phi-2\omega)} - \alpha\cos{(\phi)})}{(1-2\alpha\cos{(2\omega)}+\alpha^2)}\right)   \\
b(\tilde{y}_t) &=& \frac{(1-\alpha)\big(\sin{(2\omega t+\phi)} - \alpha\sin{(2\omega(t+1)+\phi})\big)}{2(1-2\alpha\cos{(2\omega)}+\alpha^2)} -  \\ && \frac{\alpha^{t+1}}{2}\left(\sin{(\phi)} + \frac{(1-\alpha)(\sin{(\phi-2\omega)} - \alpha\sin{(\phi)})}{(1-2\alpha\cos{(2\omega)}+\alpha^2)}\right) \nonumber
\end{eqnarray*}
%
\end{thm}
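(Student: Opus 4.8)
The plan is to compute $E(y_t)$ and $E(\tilde y_t)$ directly from the EWMA definition and then split each into its ``signal'' piece and its bias; the two computations are identical in structure, differing only in which product-to-sum identity is invoked.

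First, since $\mathbb{H}[\cdot]$ is linear and $E(\epsilon_s)=0$, the noise drops out in expectation, so only the deterministic part $\sin(\omega s+\phi)$ of $x_s$ matters. Substituting the EWMA filter into $y_t$ and $\tilde y_t$ and re-indexing the sum by $s=t-i$ turns $E(y_t)$ into a geometrically weighted sum $(1-\alpha)\sum_{s=0}^{t}\alpha^{t-s}\sin(\omega s+\phi)\sin(\omega s)$ (up to the overall sign in the modulating factor $\sin\{-\omega t\}$), and $E(\tilde y_t)$ into the same sum with $\cos(\omega s)$ in place of $\sin(\omega s)$. I would then apply the identities $\sin(\omega s+\phi)\sin(\omega s)=\tfrac12\big(\cos\phi-\cos(2\omega s+\phi)\big)$ and $\sin(\omega s+\phi)\cos(\omega s)=\tfrac12\big(\sin\phi+\sin(2\omega s+\phi)\big)$, which break each expectation into a constant (``DC'') term and a term oscillating at frequency $2\omega$. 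The DC term is immediate: $(1-\alpha)\sum_{s=0}^{t}\alpha^{t-s}=1-\alpha^{t+1}$, so it contributes $\tfrac12\cos\phi\,(1-\alpha^{t+1})$ to $E(y_t)$ and $\tfrac12\sin\phi\,(1-\alpha^{t+1})$ to $E(\tilde y_t)$ --- exactly the stated signal terms $\tfrac12\cos\phi$, $\tfrac12\sin\phi$ together with the $-\tfrac{\alpha^{t+1}}{2}\cos\phi$, $-\tfrac{\alpha^{t+1}}{2}\sin\phi$ pieces of the biases.

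The only step with real content is the oscillating sum $\sum_{s=0}^{t}\alpha^{t-s}e^{i(2\omega s+\phi)}$, whose real and imaginary parts supply the remaining terms of $b(y_t)$ and $b(\tilde y_t)$. Writing it as $\alpha^{t}e^{i\phi}\sum_{s=0}^{t}\big(e^{2i\omega}/\alpha\big)^{s}$ and noting that the ratio $e^{2i\omega}/\alpha$ has modulus $1/\alpha>1$ (hence is never $1$), the finite geometric series collapses to $e^{i\phi}\dfrac{e^{2i\omega(t+1)}-\alpha^{t+1}}{e^{2i\omega}-\alpha}$; multiplying top and bottom by $e^{-2i\omega}-\alpha$ rationalises the denominator to $1-2\alpha\cos(2\omega)+\alpha^{2}$ and expands the numerator to $e^{i(2\omega t+\phi)}-\alpha e^{i(2\omega(t+1)+\phi)}-\alpha^{t+1}e^{i(\phi-2\omega)}+\alpha^{t+2}e^{i\phi}$, from which the four-term numerators in the statement are just the real and imaginary parts. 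Substituting back with the prefactor $\tfrac{1-\alpha}{2}$ from the product-to-sum step and regrouping the transient terms (those carrying $\alpha^{t+1}$ or $\alpha^{t+2}$) with the DC transient produces the displayed formulas for $b(y_t)$ and $b(\tilde y_t)$. The essential obstacle is purely organisational: one must keep the finite upper limit $s=t$ throughout the geometric summation --- this is precisely what generates every $\alpha^{t+1}$ transient, since an infinite-horizon EWMA would leave only the steady-state fractions --- and must track real versus imaginary parts carefully after rationalising; beyond that the argument needs nothing past elementary trigonometry and the geometric series.
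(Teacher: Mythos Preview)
Your proposal is correct and is precisely the ``direct calculation'' the paper alludes to: linearity kills the noise, the product-to-sum identities split each expectation into a DC piece and a $2\omega$ piece, and the finite geometric sum (handled cleanly via the complex exponential and rationalising the denominator) produces both the steady-state and the $\alpha^{t+1}$ transient terms. There is nothing to add---the paper gives no further detail, and your outline already contains more than it does.
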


\begin{proof} By direct calculation. \end{proof}

The results of Theorem \ref{T1} allow us to understand the how the design of the filter (\ref{EWMA}), in particular the choice of the weight $\alpha$ and the distance from the boundary $t$, affects the estimate of the instantaneous phase. From these expressions we can see two distinct components in the bias; one purely oscillatory component and a `boundary effect'  which goes to zero for large t, since $\alpha^{t+1} \rightarrow 0$. The complex demodulation algorithm introduces a $2f_0$ Hz component into the signal and  the oscillatory component represents the remains of this component after filtering. By increasing $\alpha$ in (\ref{EWMA}), the effectiveness of the filter is improved and the amplitude of the oscillatory component decreases, but this will lengthen the duration of the boundary effect. 

The reliable identification of phase shift events relies on the magnitude of the shift being greater than the bias; thus the selection of a filter  is a trade-off between the power to identify low magnitude shifts and the power to resolve shifts with small inter-shift-intervals (ISIs). For fixed $\alpha$, and large values of $t$, the magnitude of the bias is bounded
	\[b(y_t), b(\tilde{y}_t) \leq \frac{(1+\alpha)}{2(1-\alpha)},
\]
To estimate the overall bias in the estimator $\hat{\phi_t}$, we use a first order Taylor series expansion
	\[E(f(y_t,\tilde{y}_t)) \approx f(E(y_t), E(\tilde{y}_t)).
\]
This gives,
\begin{eqnarray*}
E\left(\tan^{-1}\left(\frac{\tilde{y_t}}{y_t}\right)\right) &\approx& \phi + \tan^{-1}\left(\frac{\cos{(\phi)}b(\tilde{y}_t) + \sin{(\phi)}b(y_t)}{1 + \cos{(\phi)}y(Y_t) + \sin{(\phi)}y(\tilde{Y}_t)}\right)
\end{eqnarray*}
For large $t$, we can bound the magnitude of the overall bias using 
\begin{eqnarray*}
b(\hat{\phi}_t) &\approx& \tan^{-1}\left(\frac{\cos{(\phi)}b(\tilde{y}_t) + \sin{(\phi)}b(y_t)}{1 + \cos{(\phi)}b(y_t) + \sin{(\phi)}b(\tilde{y}_t)}\right) \\
&\leq&  \tan^{-1}\left(\frac{(1+\alpha)}{\sqrt{2}(1-\alpha)-(1+\alpha) }\right), \quad \alpha > \frac{\sqrt{2}-1}{\sqrt{2}+1}
\end{eqnarray*}

Although the theoretical results from this section  are for a simple filter, they provide insight and intuition about the behaviour of more complicated filters, specifically the trade-off between size of persistent oscillatory bias and transient boundary induced bias. The existence of a non-zero bias term means that there will be a lower bound on the magnitude of shifts ($\Delta_{min}$) which can be reliably identified, and by reducing the noise levels we increase the range of identifiable events. Conversely, the boundary effect which occurs at the start of the recording will also be present at any phase shift event, and this will result in a minimum interval ($ISI_{min}$) such that two change point events can be accurately resolved. Decreasing the bandwidth in the complex demodulation algorithm (2$\delta$) will decrease overall levels of noise, but lengthen the duration of transient boundary effects in the instantaneous phase estimate; the bandwidth represents a trade-off between statistical power and temporal resolution. 

\subsection{Phase Shift Identification}
\label{phaseSynchrony}

The identification of phase shift events is considered from two different perspectives, as a post-hoc analysis which uses the entire recording, or real-time analysis which considers only  information available prior to the current time. The latter occurs when it is necessary to have an immediate response to a shift event, such as in statistical process control or BCI; in this situation it is very important to identify events with computational efficiency and high temporal resolution. 

To identify phase shift events, we first test the hypothesis of no phase shift events ($H_0: \phi_t = \phi_0$) against the alternative of a single phase shift event at time $t_0$ and with magnitude $\Delta$ ($H_a: \phi_t = \phi_0 + \Delta H(t - t_0)$). Two statistics are considered,
\begin{eqnarray}
S_1 &=& \max_{2 \leq t \leq {N-1}} s_1(t), \\
S_2 &=& \max_{2 \leq t \leq {N-1}} s_2(t),
\end{eqnarray}
where
\begin{eqnarray*}
s_1(t) &=& \left(\frac{N}{t(N-t)}\right)^{1/2} \sum_{i=1}^t \left(\hat \phi_i - \bar{\hat \phi}\right) \\
s_2(t) &=& \frac{|\hat \phi_{t+1} - \hat \phi_{t-1}|}{2},
\end{eqnarray*} and where $\hat \phi_t$ is the estimate of the instantaneous phase at time $t$ and $\bar{\hat \phi}$ the temporal average. The former is a cumulative summation (CUSUM) type statistic which is favoured in traditional change-point analysis (CPA), while the latter corresponds to high temporal resolution phase derivative (PD) method which is currently applied to EEG recordings. 

The critical value of the hypothesis test ($\Phi_\alpha$), will be calculated using bootstrapping and approximation techniques (see \S \ref{para} and \S \ref{nonpara}). If the null hypothesis is rejected, a change point time $\hat{t}_0$ is estimated such that
	\[s_{i}(\hat{t}_0) = \max_t s_{i}(t), \quad i=1,2.
\]
Once $\hat{t}_0$ has been identified, the estimator can be iteratively applied to both halves of the sample to search for additional phase shift events. This process continues until either no more shifts are found, or there are less than $N_{min}$ points in a signal. Note that the existence of a phase shift event causes an additional `boundary effect' at the discontinuity of the instantaneous phase. Such a boundary effect obscures the phase dynamics around the shift event. To ensure additional spurious phase shift events are not detected due to this boundary effect, an interval $(t_L, t_U)$ about the estimated change-point must be excluded from further analyses. 

For the CUSUM estimator, we use bootstrapped datasets to estimate the parameters $N_{min}$, $t_L$ and $t_U$. Bootstrapped signals with no phase shift events are used to find the minimum value $N_{min}$ which achieves the desired false positive rate, 
	\[P(S_1(x_{1:N_{min}}) > \Phi_{\alpha}) \leq \alpha
\]
Additionally, bootstrapped signals with a single phase shift event are used to estimate a value $ISI_{min}(\alpha)$, such that if 
	\[t_L = \hat{t}_0 - ISI_{min}(\alpha), \quad t_U = \hat{t}_0 + ISI_{min}(\alpha),
\]
then the lower and upper subsamples $x_{1:t_L}$ and $x_{t_U:N}$ have the desired sampling distribution, that is, 
	\[P(S_1(x_{1:t_L}) > \Phi_{\alpha}) \leq \alpha, \quad P(S_1(x_{t_U:N}) > \Phi_{\alpha}) \leq \alpha.
\]

For the PD estimator, since this is an instantaneous estimator, we need not be so conservative in the exclusion of data around the change-point. In fact, it is only required to remove any points around $\hat{t}$ which are greater than the critical value, 
	\[t_L = \max_{t<\hat{t}_0}\{t | s_2(t) < \Phi_{\alpha}\}, \quad t_U =  \min_{t>\hat{t}_0}\{t | s_2(t) < \Phi_{\alpha}\}
\]

\subsection{Parametric Bootstrapping}
\label{para}

To learn about the sampling distribution of estimators for a more complicated filter than considered in \S \ref{theory}, a parametric bootstrapping procedure is used,  \citep{Efron1979}. Bootstrapping techniques are well developed for CPA in independent observations~\citep{Antoch2001}. To overcome the problem of dependencies, block bootstrapping techniques for CPA have been developed which preserve the temporal correlations in the bootstrapped datasets~\citep{Kirch2007}. 

Here we consider a simple 9 Hz ($f_0=9$) oscillator, sampled at T=250 Hz with constant unit amplitude. To account for the boundary effects in phase estimation, the first $N_{burn}$ samples are removed from the analysis,
	\[x_t = \sin\left(\frac{2 \pi f_0 t}{T} + \phi_t\right), t=N_{burn} \ldots N.
\]
In this parametric bootstrap,  data are generated with i.i.d.~additive noise ($\epsilon_t \sim N(0,1)$). Both the oscillator and noise term are normalized to unit power, and the signal-to-noise ratio (SNR) of the simulated observable is set by a weight parameter $r$,
	\[x^{(b)}_t = \frac{rx_t}{||x_t||} + \frac{(1-r)\epsilon_t^{(b)}}{||\epsilon_t^{(b)}||}, \quad b=1..B.
\]The relationship between $r$ and the SNR is given by 
	$SNR = 10\log_{10}\frac{r^2}{(1-r)^2}.$

For large values of N, the sampling distribution of the maximum value in a sequence of random variables converges to the generalized extreme value (GEV) family of distributions; this result has been proven for a broad class of dependent random variables, and does not require a stationarity assumption~\citep{Galambos1972}. 

In Supplement A we present a simulation study which explores the behaviour of $S_1$ and $S_2$ in a simple parametric application. The results confirm the intuitive difference between a cumulative and instantaneous estimator. The $S_1$ (cumulative) estimator has greater power to identify phase shift events, especially with high noise or low effect size. Conversely, the $S_2$ (point-wise) estimator is better at resolving multiple shift events, it is able to identify shifts with a lower ISI. 

\subsection{Non-Parametric Methods}
\label{nonpara}

The above parametric procedures, where the noise is assumed i.i.d., provides a clear intuition about the range of shift magnitudes and SNRs where a shift can be reliably identified; however, such methods are not appropriate for all situations. Neither the R\"{o}ssler attractor nor EEG recordings can be accurately described by an ideal oscillator with i.i.d.~noise. More often, signals from complex systems have oscillations which occur around, but not exactly on a fixed frequency. Other factors which can affect the detection of shift events are time dependent noise, or the existence of unrelated systematic features of the system such as oscillators at different frequencies. 

To obtain a nonparametric estimate of the sampling distributions, it is necessary to have some measure of the temporal dependence which remains after the filtering process. Here we use the first zero crossing of instantaneous phase autocorrelation function ($\tau$), although other methods may also be appropriate such as using the local minimum of the time lagged mutual information function~\citep{Fraser1986}. The existence of phase shift events will artificially inflate the value of $\tau$, so ACF functions should ideally be estimated from data which does not contain shift events. Additionally, due to the non-stationary nature of the signal, estimates of $\tau$ should be combined from several different points in the signal. Here we take the arithmetic average of the $\tau$ values, though the median could also be used. \\

\subsubsection{Block Bootstrapping}

To estimate the sampling distribution of the CUSUM estimator ($S_1$), we employ the nonparametric block-permutation bootstrap technique. There are several different blocking techniques, notably the moving block~\citep{Kunsch1989} and circular bootstrap~\citep{Politis1992}. Here we use a non-overlapping block method, \citet{Kirch2007}, which is well studied in the change-point paradigm, the observed time series is partitioned into K intervals of length L, 
	\[x(k) = x_{1+L(k-1)}:x_{Lk}, \quad k = 1..K.
\]
For a randomly generated permutation $\pi$ of (1..K), the surrogate dataset is 
	\[x^{(b)} = [x(\pi(1)), x(\pi(2)), ..., x(\pi(K))].
\]
Since correlations with-in each block remain unchanged, the surrogate datasets mimic the correlation structure of the underlying process. The value of L is chosen such that each window fully captures the autocorrelation structure of the signal, here we take a value of $L=2\tau$. 

\subsubsection{Threshold Method} 

The method of block bootstrapping, is not applicable for the $S_2$ estimator. Creating a block permutation of the original data does not have the desired effect on the estimator, all with-in block values of $s_2$ remain unchanged; this is because the nature of the estimator is instantaneous rather than  cumulative. 

We suggest an approximation to the distribution of $S_2$ which takes into account the variance and autocorrelation of the instantaneous phase. The phase differences are first centred and normalized to create a standardized sequence of N variables. To account for the autocorrelations, we treat each block as an independent observation and the distribution of the maximum, $S_2$, is approximated as the maximum of $K^*=2\left\lfloor{N/\tau}\right\rfloor$ independent random variables. Critical values are based on the maximum of independent Normal random variables, although for large values of $K^*$ these will converge to the GEV distribution.  

In the case of multiple phase shift events, rather than simply applying the method recursively, we suggest a pooled estimate of the variance which includes all parts of the signal which have not be identified as a phase shift event. 
\begin{eqnarray}
\hat{\sigma}_{pool}^2 &=& \frac{\sum_{j=1}^k (N_j-1)\hat{\sigma}_j^2}{\sum_{j=1}^k{(N_j-1)}} \\
\hat{\sigma}_j^2 &=& \frac{1}{N_j-1}\sum_{i=t_L(j)}^{t_U(j)} \left(s_1(j) - \bar{s_1}(t_L:t_U)\right)^2 \\
N_i &=& t_U(i)-t_L(i) + 1 
\end{eqnarray}
Thus the algorithm is: 
\begin{enumerate}
\item Estimate the standard deviation of the phase locked values ($\hat \sigma^2_{pool}$)
\item Set the threshold to $\Phi_{\alpha} = \hat \sigma_{pool} z_{\left(\alpha^{K^*_i}\right)}$
\item Update the set of boundaries ($t_L$, $t_U$)
\item Test for phase shift events
\begin{itemize}
	\item If any new shift events are identified, update the set of phase shift events ($t_L$, $t^*$, $t_U$) and then go to step 1
	\item Otherwise stop
\end{itemize}

\end{enumerate}

As phase shift events are removed from the estimate, $\sigma_{pool}$ will decrease, causing the critical value $\Phi_{\alpha}$ to be monotonically decreasing. For this reason, the boundaries ($t_L$ and $t_U$) must be updated at each iteration to account for the lowered threshold. Occasionally, lowering the threshold will cause two shift events to occur consecutively, in this situation it is not clear if there actually two events, so they should be merged into a single event rather than risk falsely declaring a spurious shift. 

In addition to the increased temporal resolution, another benefit of this approximation is its computational efficiency compared to the bootstrapping techniques presented. The current algorithm is described as a post-hoc analysis of the entire signal; however, it could be adapted into an online implementation.

\section{Simple Oscillators}
\label{Osci}

In this section  the accuracy of the suggested methods is evaluated against the parametric alternative in the case of simple oscillators with i.i.d.~noise. For each application, a fourth order low-pass Butterworth filter is used to estimate the instantaneous phase because it has a maximally flat frequency response, i.e. it reduces the amount of ripple in the pass-band. To assess the methods, we use simulated oscillators which have a SNR of 0 dB and include $M=20$ phase shift events. The shift magnitudes and ISIs are randomly drawn to obtain a robust comparison,
\begin{eqnarray*}
\phi_t &=& \sum_{i=0}^M \left(\sum_{j=0}^i \Delta_j \right)H(t - t_i) \\
\Delta_j &\sim& Unif\left((-\pi, -\Delta_{min}]\cup[\Delta_{min}, \pi]\right) \\
t_i &=& t_{i-1} + ISI_{min} + Exp(ISI_{min})
\end{eqnarray*}
Each estimator is evaluated based on the rate of true positives (TP), true negatives (TN), false positives (FP) and false negatives (FN). The ability of each method to correctly detect change-point events are displayed for a range of significance level using receiver operating characteristic (ROC) curves, which plot TP against FP. The accuracy (ACC = (TP+TN) / (TP+FP+TN+FN) ) provides a measure of goodness at specific significance levels, while the area under an ROC curve (AUROC) provides an overall measure of quality. The accuracy measure gives equal weight to TP and TN events, this is not necessarily optimal for every situation, other weightings may be considered depending on the application. 

The resulting ROC curves from the change point analysis of twenty simulated oscillators (for a total of 400 shift events) are shown in Figure \ref{Figure4}. Both the CUSUM and PD estimators have high power to identify phase shift events, while controlling  the false positive rates. There is a cross-over of the ROC curves for the nonparametric estimators, which implies neither method is uniformly better than the other; the PD method provides the maximum accuracy, but the CUSUM method can provide increased power at the cost of a slightly higher FP rate. The parametric CUSUM method works best overall, and significantly better than the nonparametric CUSUM, while both the parametric and nonparametric PD methods give similar results. Table \ref{Table1} shows the maximum ACC and AUROC values for each method, both nonparametric methods perform well, though the PD estimator is marginally better. 

\begin{figure}[!ht]
\begin{center}
\includegraphics[width = 4in]{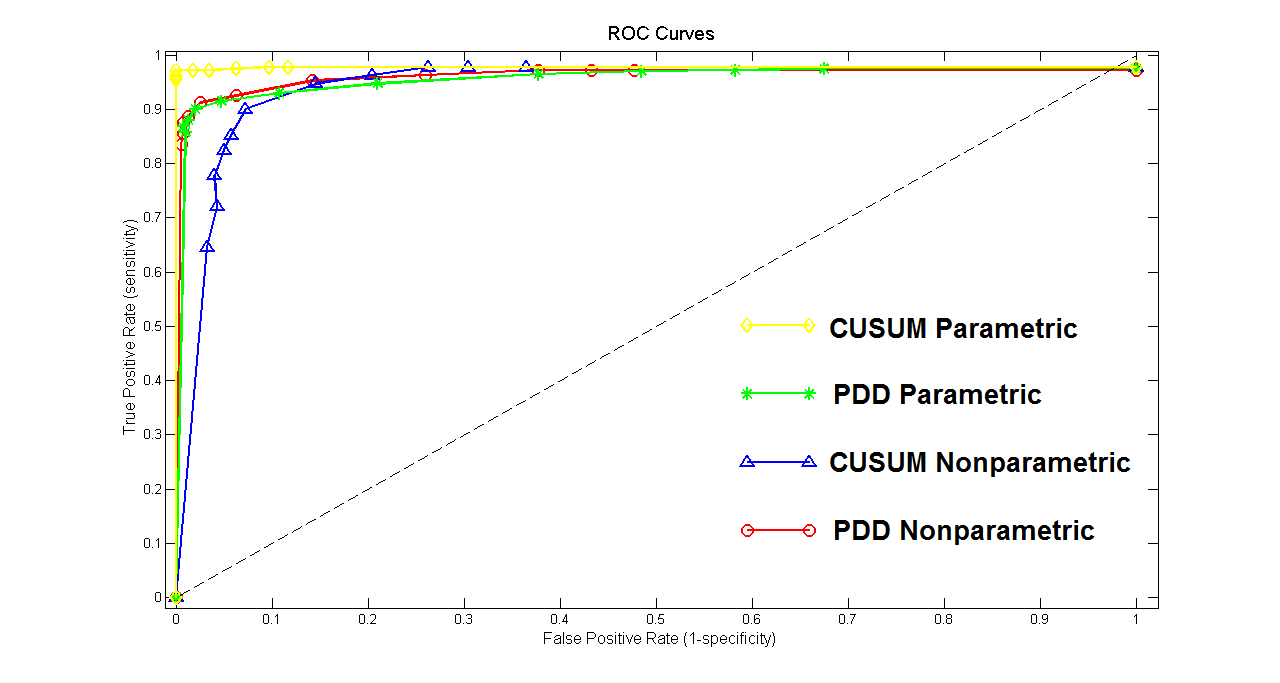}
\caption{The ROC curves for both the parametric and nonparametric versions of the CUSUM and PD estimators. The parametric CUSUM estimator (yellow) performs the best, dominating each other curve. The parametric (green) and nonparametric (red) PD estimators perform similarly, they are both better than the nonparametric CUSUM (blue) for the low FP rates but then there is a cross-over and the nonparametric CUSUM performs better for high FP ranges.}
\label{Figure4}
\end{center}
\end{figure}

\begin{table}[!ht]
\begin{tabular}{ccc}
Method & mACC & AUROC \\
\hline
CUSUM Parametric $S_1$ & 0.9863 ($\alpha=0.1$) & 0.9772 \\
PD Parametric $S_2$ & 0.9400 ($\alpha=0.01$) & 0.9555 \\
CUSUM Nonparametric $S_1$ & 0.9137 ($\alpha=0.1$)& 0.9438 \\
PD Nonparametric $S_2$ & 0.9438 ($\alpha=0.03$) & 0.9610
\end{tabular}
\caption{A table of the maximum accuracy (mACC) and area under the curve (AUROC) for the simple oscillators. Both nonparametric measures provide comparable results, although the instantaneous estimator ($S_2$) has both the highest mACC and AUROC. }
\label{Table1}
\end{table}

\section{R\"{o}ssler Attractor}
\label{Ross}

In this section, we apply the proposed methods to a system of coupled R\"{o}ssler attractors. This application represents an intermediate level of difficulty between the i.i.d. oscillators and an unsupervised EEG application. The chaotic dynamics are not strictly periodic and significant power leaks into nearby frequency bands, this is similar to the spectral properties of EEG signals; however, unlike an EEG application the true shift events in the R\"{o}ssler attractor are still available to calibrate methods.  

The R\"{o}ssler system is a set of three ordinary differential equations~\citep{Rossler1976}, which is linear except for a single bi-linear term. The attractor primarily rotates around the origin in the x-y axis, while showing spontaneous bursts in the z-direction. Sample trajectories generated with random initial conditions are shown in Figure \ref{Figure5}; a burn-in period of 30 seconds is thrown away to remove transient activity. 

\begin{figure}[!ht]
\begin{center}
\includegraphics[width=4in]{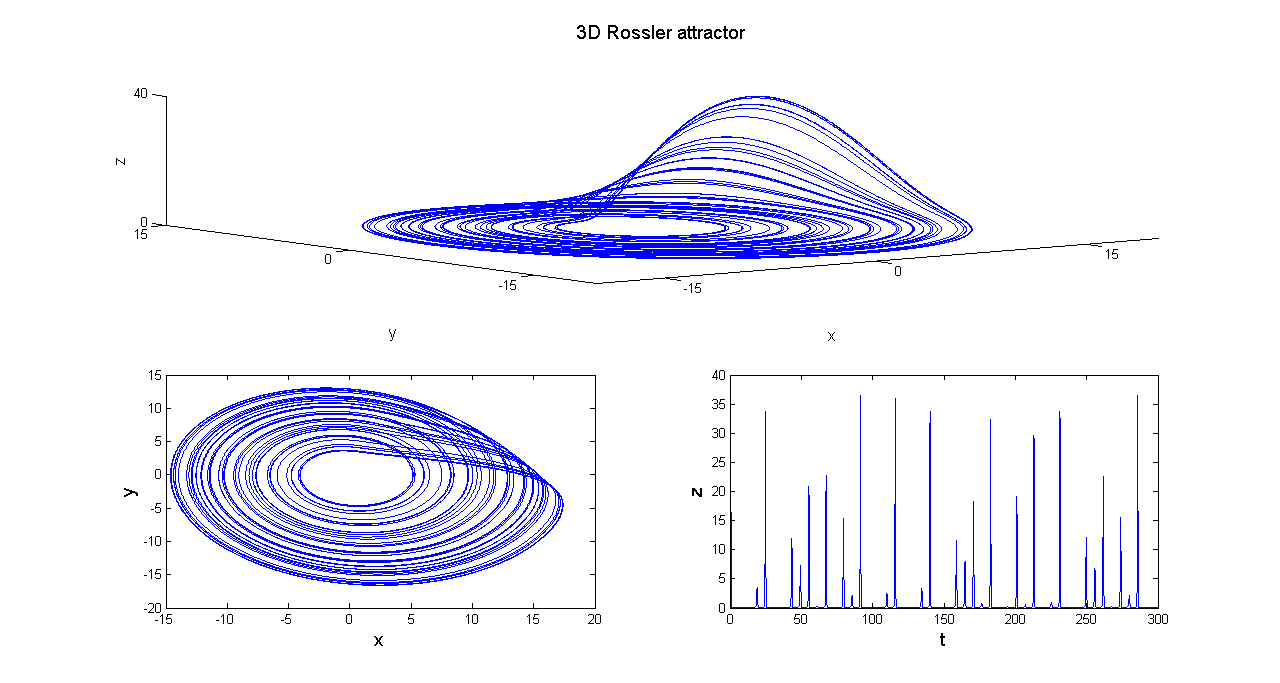}
\caption{Sample trajectory of a R\"{o}ssler attractors.}
\label{Figure5}
\end{center}
\end{figure}

Here we explore the two coupled R\"{o}ssler attractors, 
\begin{eqnarray*}
\dot{x}_{1/2} &=& \omega_{1/2}(-y_{1/2}-z_{1/2}) + C(x_{1/2} - x_{2/1})   \\
\dot{y}_{1/2} &=& \omega_{1/2}(x_{1/2} + ay_{1/2})  \\
\dot{z}_{1/2} &=& \omega_{1/2}(b + z_{1/2}(x_{1/2}-c)) 
\end{eqnarray*}
with $a=0.15$, $b=0.2$, $c=10$. The frequency parameters were $\omega_1 = 2\pi f_0 + \delta\omega$ and $\omega_2 = 2\pi f_0 - \delta\omega$, where  the attractors have average frequency of $f_0=9Hz$ and a frequency mismatch of $\delta\omega=0.675$. For this set of frequency parameters, the phase shift dynamics observed in ~\citet{Rosenblum1996} are recreated by setting the coupling parameter to $C = 0.12$. Trajectories were generated at a rate of 10 kHz and then down-sampled to 250 Hz. Estimated power spectral densities are shown in Figure \ref{Figure6}, the attractor shows a distinct peak at approximately $9.25$ Hz. 

To estimate the autocorrelation which remains in the instantaneous phase, we increase the coupling value to C=0.5 to generate signals with strong synchronization and no shift events. The ACF of the instantaneous phase difference for the strongly coupled oscillators is shown in Figure \ref{Figure6}. This first zero crossing of the ACF function occurs at $\tau=1183$, or approximately 4.5 seconds. 
\begin{figure}[!ht]
\begin{center}
\includegraphics[width=4.5in]{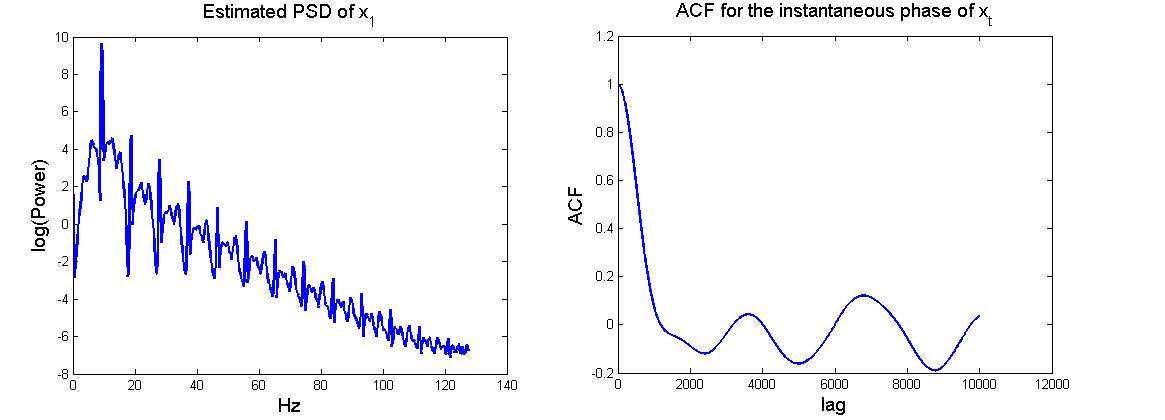}
\caption{Left: Estimated log PSD of $x_1$ for the coupled attractors (densities based on $x_2$, $y_{1/2}$ were nearly identical). There is a distinct peak at approximately $9.25$ Hz. The signal is not strictly periodic and the spectral components `leak' into other bands. Notably, there is a periodic beat frequency due to the frequency mismatch ($d\omega$) of the oscillators. Right: Estimated ACF function of the instantaneous phase of $x_t$ during a period of phase synchronization (no shift events). The first zero crossing of the ACF can be used to determine the appropriate value of L for the block bootstrapping algorithm. Here the crossing occurs at a lag of 1183, or approximately 4.5 seconds.}
\label{Figure6}
\end{center}
\end{figure}
Trajectories primarily oscillate in the x-y plane, so there is a natural definition for the phase of this attractor (in the Poincare map sense), which will be take as the true phase of the system,
\begin{equation}
\phi_t = tan^{-1}\left(\frac{y_t}{x_t}\right).
\end{equation}
Instantaneous phase variables can also be defined in the Hilbert transform sense, using observables of the system. The complex demodulation algorithm is applied with a centre frequency of $\omega = 9.25$ Hz (see Figure \ref{Figure6}) and a bandwidth of $\Delta\omega = 0.15$ Hz; resulting phase differences between the two coupled oscillators are shown in Figure \ref{Figure7}, for both the true phase and the Hilbert phase with observables $h(x_t,y_t,z_t) = x_t$ and $h(x_t,y_t,z_t) = y_t$.

\begin{figure}[!ht]
\begin{center}
\includegraphics[width=4in]{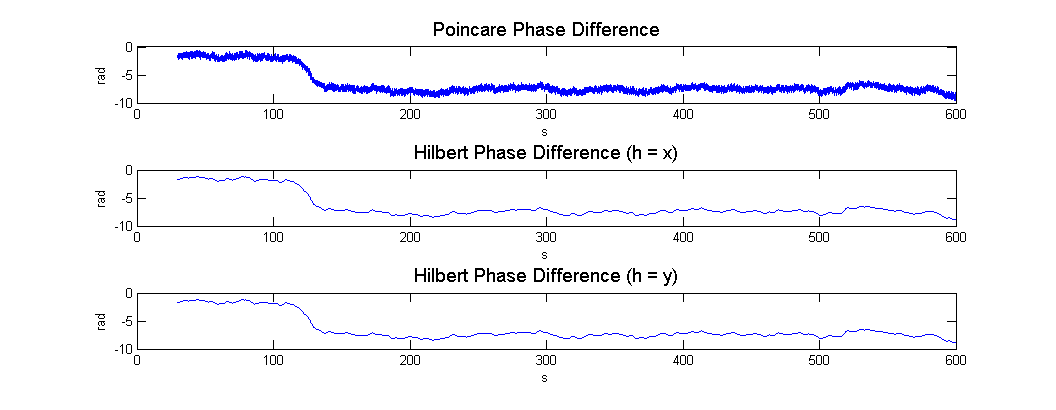}
\caption{Coupled R\"{o}ssler attractors with $a=0.15$, $b=0.2$, $c=10$, $C=0.12$, $w=2\pi 9$, $dw=0.0675$. Phase difference plots for three definitions of phase; top: Poincare phase, middle: Hilbert transform of $x_t$, bottom: Hilbert phase of $y_t$. In all three definitions, there is a clear locking / shifting dynamic, with a phase shift at approximately $s=130$ seconds.}
\label{Figure7}
\end{center}
\end{figure}

\subsection{R\"{o}ssler Results}

To assess the ability of each method to identify phase shift events in the weakly coupled R\"{o}ssler attractors, we generated fifty datasets with random initial conditions and a length of 10 minutes. Phase shift events were manually marked using the Poincare definition of phase. There were 139 total phase shift events for an average of 2.78 per dataset. 

Using the observable $h=x_t$, shift events are estimated using both nonparametric methods. An ROC curve of the results (see Figure \ref{Figure8}) shows that the CUSUM estimator outperforms the PD estimator. There are several potential reasons why the CUSUM estimator outperforms the PD in this situation, in contrast to independent oscillators; (1) longer average ISI's, (2) uni-directed phase shifts, and (3) temporally correlated `noise'. The PD estimator eventually crossed the CUSUM estimator in the ROC curve, this is due to the few shift events which occur with an ISI that is too small to be resolved by the CUSUM estimator. The maximum accuracy (mACC) and area under the curve (AUROC) for these methods are shown in Table \ref{Table2}, here we see that both estimators perform well but the CUSUM estimator performs best.

\begin{figure}[!ht]
\begin{center}
\includegraphics[width=4in]{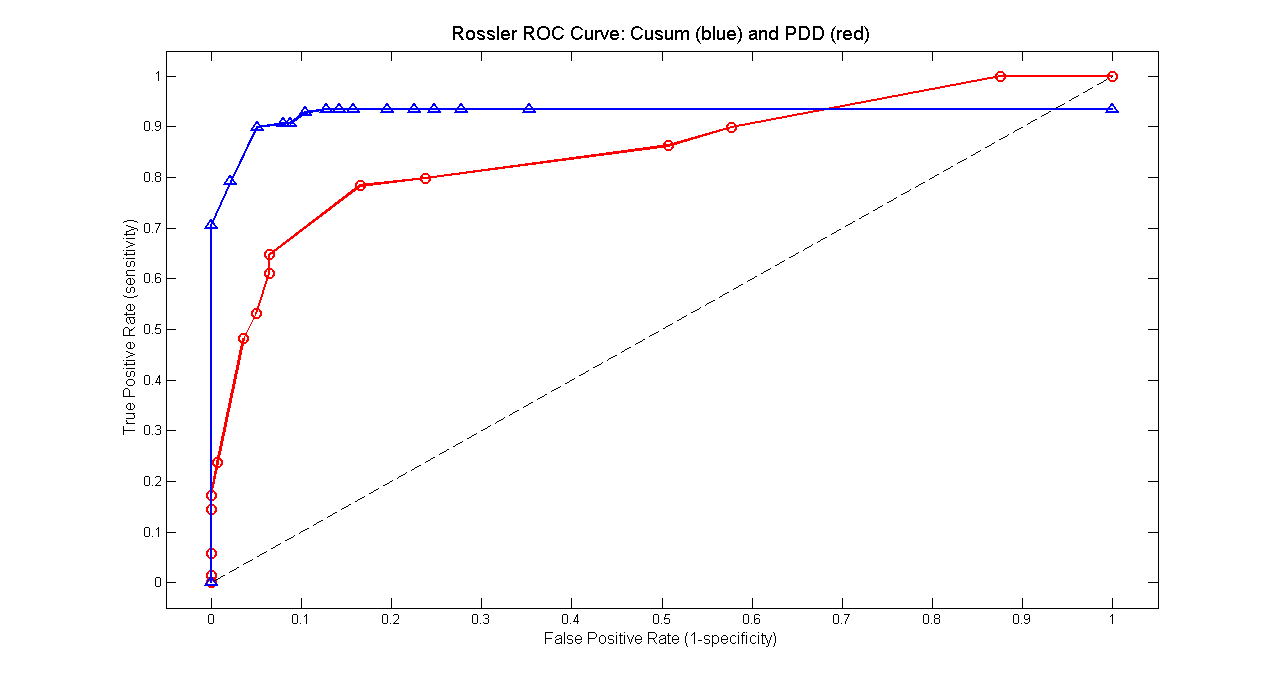}
\caption{Plot of the ROC curve of the nonparametric CUSUM (blue) and PD (red) phase shift estimators, applied to simulated data from coupled R\"{o}ssler attractors. For this application, the CUSUM estimator outperforms the PD estimator in both mACC and AUROC. The intervals between shift events from the attractors are long enough that the CUSUM method identifies over 90\% of them. The PD is able to identify all shift events, but does so with high false positive rate, due to the `temporally correlated noise' from the attractor dynamics.}
\label{Figure8}
\end{center}
\end{figure}

\begin{table}[!ht]
\begin{tabular}{ccc}
Method & mACC & AUROC \\
\hline
Nonparametric CUSUM $S_1$ & 0.8225 ($\alpha=0.05$)& 0.9238 \\
Nonparametric PD $S_2$ & 0.6187 ($\alpha=0.1$) & 0.8504
\end{tabular}
\caption{A table of the maximum accuracy (mACC) and area under the curve (AUROC) measures of the ROC plot for coupled R\"{o}ssler attractors. For this application the CUSUM estimator outperforms the PD estimator.}
\label{Table2}
\end{table}

To investigate potential power law behaviour in the distribution of ISIs, we generate one thousand twenty-minute datasets with random initial conditions. Applying the CUSUM estimator with $\alpha=0.05$ found a total of 5661 phase shift events, with an average ISI of $\mu=2.72$ minutes ($\sigma=1.66$). Histograms of the observed ISIs are shown in Figure \ref{Figure9}, on standard and log-log scales; the log-log histogram clearly shows an asymptotic power-law behaviour with an estimated slope of $q=-3.92$.

\begin{figure}
\begin{center}
\includegraphics[width=4in, height=2in]{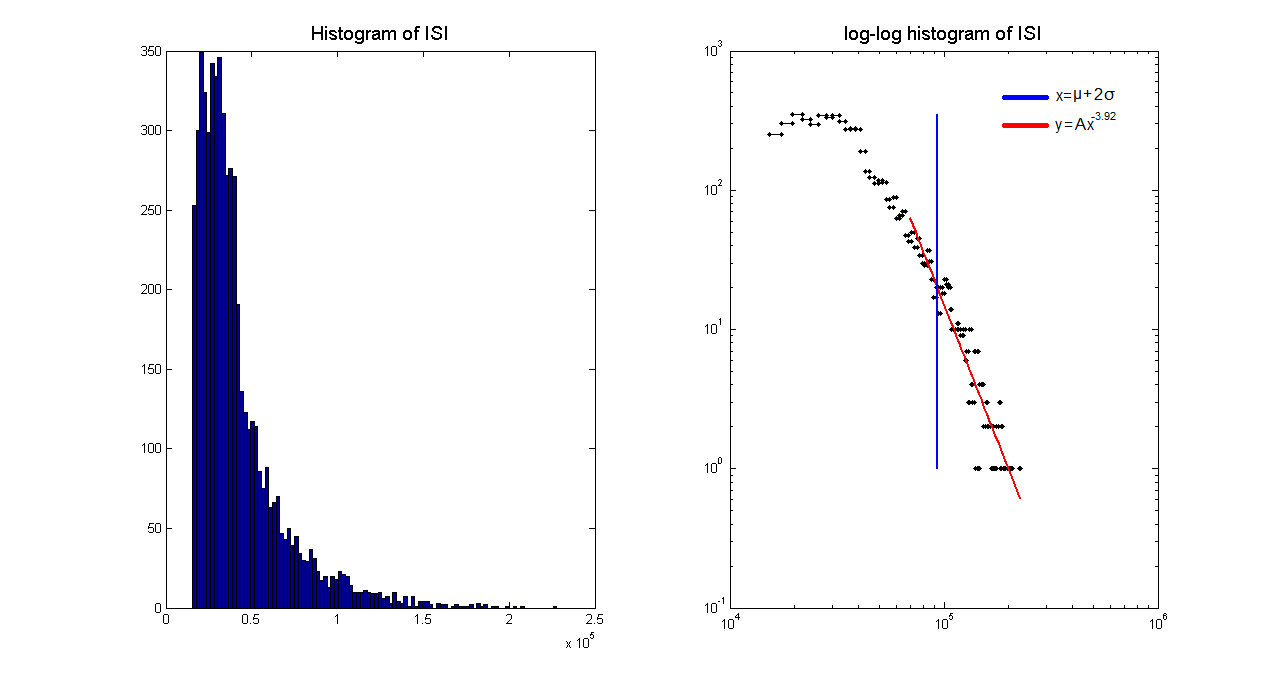}
\caption{Left: Histogram of the observed ISIs in the weakly coupled R\"{o}ssler attractors, as estimated by the CUSUM estimator. Right: corresponding log-log scale histogram, there is a linear relationship in the tail of the distribution, with a scaling exponent of 3.92.}
\label{Figure9}
\end{center}
\end{figure}

\section{EEG Phase Shift}
\label{EEG}

This section applies the methodology to EEG recordings. These signals are comprised of many different components, and as in the case of the R\"{o}ssler attractors, there is power in multiple spectral bands. In this real world problem there is no information on the truth of phase shift events to calibrate the algorithms, thus the utility of the methods is assessed by their ability generate neurologically plausible results. Relationships between shift events and external visual stimuli are explored, with emphasis on the scalp regions which are associated with such a visual task. Additionally, the distribution of ISI's is explored, looking for evidence of the asymptotic power-law distribution which are hypothesised to exist~\citep{Sporns2004}. 
 
In order to assess the proposed methods, EEG recordings were obtained from 18 participants (9 male, 9 female, aged 12-14 years) during a visual task, specifically a modified Erikson Flanker task~\citep{Eriksen1974}. The task was to discriminate two stimuli by pressing the corresponding button for each. Stimuli were presented for 200 ms followed by a variable inter-trial interval (ITI) of 800 to 900 ms. The task took approximately 15 minutes to complete. EEG recordings were obtained from 121 scalp sites (EGI, Eugene, OR) at a sampling rate of 500 Hz. The recordings were reduced to a set of 16 standard sites (see Figure \ref{Figure11}) representing regions of the left hemisphere (Fp1, F7, F3, C3, P3, T3, T5, O1) and the right hemisphere (Fp2, F8, F4, C4, P4, T4, T6, O2). Impedances were maintained below 30 k$\Omega$ throughout recording. Data were re-referenced offline to the average of all sites and corrected for eye movements using the Gratton and Coles procedure~\citep{Gratton1983}. An automated artifact rejection procedure was used in addition to manual examinations of the data. 

The beta band (13-30Hz) of EEG recordings is often associated with sensorimotor activity~\citep{Pfurtscheller1981}, such as in the flanker task in our data.  A recent attempt at a unifying hypothesis of the functional role of beta band oscillations suggests that it is responsible for maintenance of sensorimotor or cognitive state~\citep{Engel2010}. Spectral power analyses of the beta-band has been previously employed for classification in BCI applications~\cite{Bai2008}. Here we focus specifically on the values $\omega=16.5$ and $\Delta\omega=3.5$, corresponding to the (13-20Hz) lower beta band. 

To estimate the length of the autocorrelations in the EEG application, we first divide the recordings into 4 second segments. For each segment and each pair of channels, we estimate first zero-crossing ($\tau$) in the ACF of the wrapped instantaneous phase. The value of K is calculated based on the average of all the estimated values of $\tau$, this results in a value of $K=85$. 

For each pair of signals, we apply the PD identification algorithm to the instantaneous phase difference of the pairs, to identify spontaneous desynchronizations. In total there were $4\,690\,214$ phase shift events across all 18 participants and $16\times 15=120$ pairs of electrodes, with an average ISI of 262 ms (standard deviation 226 ms) or 3.8 shifts per second. We also investigate power-law behaviour in the distribution of ISI of beta band phase shift events. Standard and log-log scale histograms of the distribution of ISIs are shown in Figure \ref{Figure10}. There appears to be asymptotic power-law behaviour in the tail of the distribution with a scaling exponent of $q=6.26$ 

\begin{figure}
\begin{center}
\includegraphics[width=4in, height=2in]{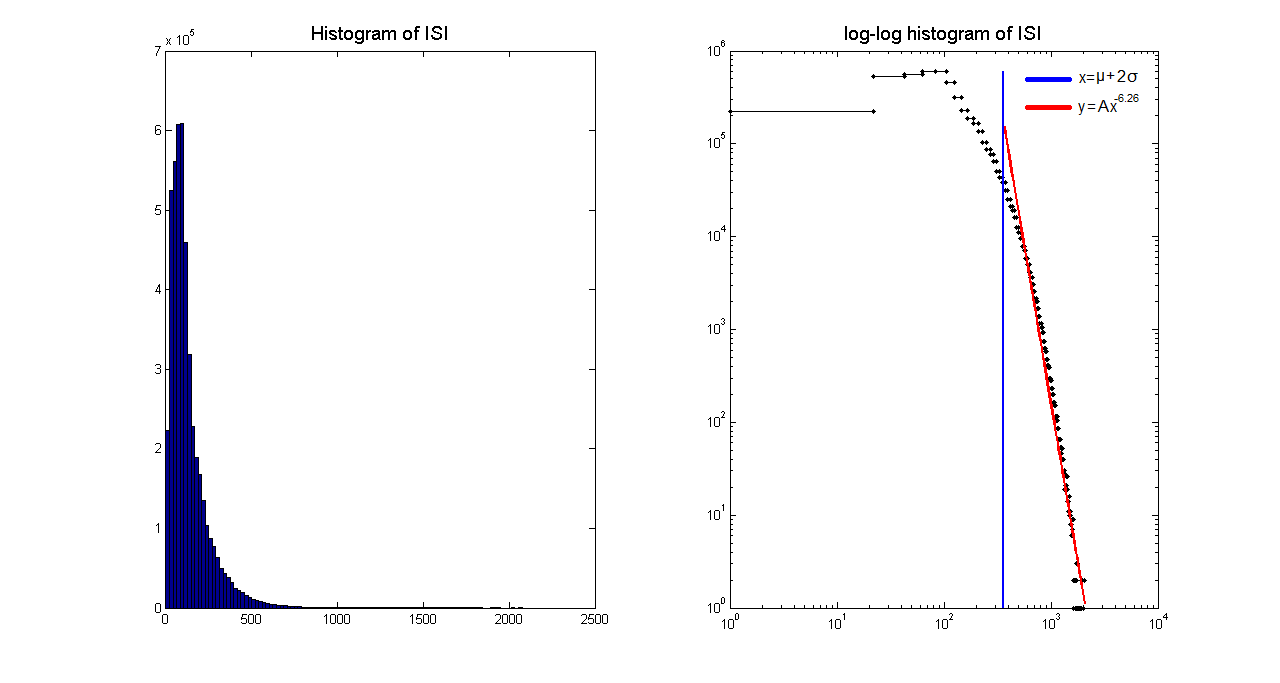}
\caption{Left: Histogram of the observed ISIs in the beta band activity of EEG recordings during a visual vigilance task, as estimated by the PD estimator. Right: The corresponding log-log scale histogram, there is a linear relationship in the tail of the distribution, with a scaling exponent of 6.26.}
\label{Figure10}
\end{center}
\end{figure}

We further consider the relationship between the occurrence of phase shift events and the task stimuli. For each pair of signals (i,j), and each shift event (k), we record the amount of time since the most recent stimulus event. If there is no relationship between the shift and stimuli, then we expect that the times will be uniformly distributed. To test this hypothesis, we group the variable into 10 equal sized bins, between 0-500 ms and apply a $\chi^2$ test for uniformity. Results from this analysis are summarized in Figure \ref{Figure11}; there are three pairs of sites (Fp1-F7, T3-O1, T6-O1) which are significant at the $\alpha=0.05$ level, including a Bonferroni correction $(p < \alpha/120)$ and an additional six pairs (Fp1-T3, Fp1-O1, Fp2-O1, T3-T4, T5-O1, F8-P3) which are significant with a less conservative correction ($p < 0.05/30$). As expected for our visual attention task, many of these pairs involve the occipital (O1,O2) sites over the visual cortex and prefrontal sites (Fp1, Fp2), which are associated with attention.  

\begin{figure}
\begin{center}
\includegraphics[width=2.8in]{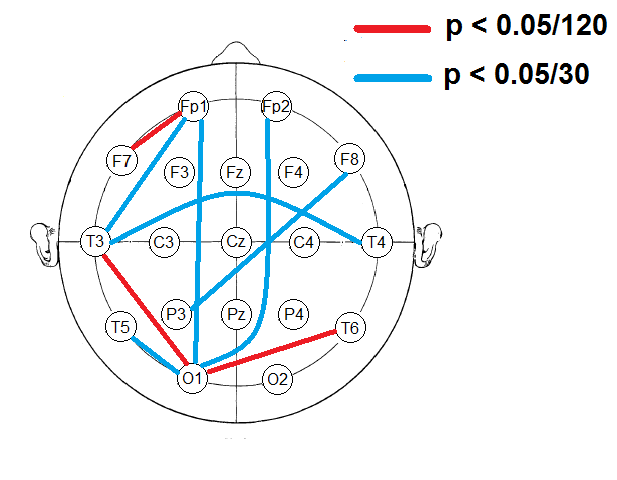}
\caption{The standard 10-20 system introduced in~\cite{Jasper1958}. Results of the tests for uniformity of phase shift events between visual stimuli. Red lines represent significance with a full Bonferroni corrected p-value ($p < 0.05/120$) while blue lines represent significance at a less conservative level ($p < 0.05/30$). The region most commonly related to the stimuli is the left occipital region O1, directly above the visual cortex. Additionally, the frontal sites Fp1 and Fp2, commonly associated with attention, are also related to the stimuli.}
\label{Figure11}
\end{center}
\end{figure}

\section*{Discussion}

In applications such as BCI, it is desirable to not only identify events with high temporal resolution, but also in real-time so that feedback can be provided immediately. The instantaneous nature of the PD estimator, as well as its computational efficiency, make it easily modified to perform in real-time. Conversely, there is no obvious analog for the CUSUM  estimator, which requires access to the entire signal, as well as time consuming bootstrap procedures, and as such is better suited to post-hoc analysis. 

Regarding the stability of the stability of the proposed methods, both methods are stable in the identification of a single phase shift event. In the case of multiple shift events, it is possible that small changes in parameters may have a large effect on results of the CUSUM estimator. If the first change-point is not identified, then all other potential shifts are not identified. This is not the case in with the PD estimator, where the instantaneous nature causes the identification of a change-point to be independent from other change points, resulting in a more stable algorithm.  

There are often many sources of noise in measuring observable time series from complex systems; systematic effects, additional oscillating components, temporally correlated noise, spatially correlated noise (i.e. source mixing in EEG due to volume conduction), frequency misspecification or narrow-band signals (non-fixed frequency). An investigation into the effect of such features in controlled environment may provide insight as to how to formulate a more robust solution to the shift identification problem.

\section*{Acknowledgements} We would like to thank Dr. Sid Segalowitz and the Brock University Laboratory of Cognitive and Affective Neuroscience for providing us with the EEG data used in this study.

\begin{supplement}[id=suppA]
  \sname{Supplement A}
  \stitle{Simulation Exercise}
  \slink[doi]{COMPLETED BY THE TYPESETTER}
  \sdatatype{.pdf}
  \sdescription{We present a simulation study exploring the behaviour of $S_1$ and $S_2$ in a parametric application.}

\end{supplement}

\bibliography{PhaseShift}
\bibliographystyle{imsart-nameyear}

\end{document}